\newcommand{\orcid}[1]{\href{https://orcid.org/#1}{\includegraphics[width=8pt]{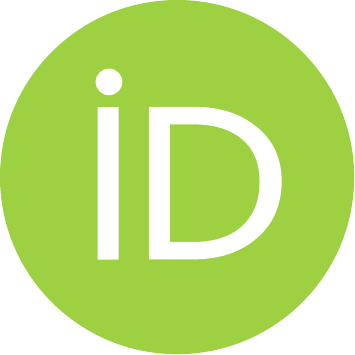}}}
\newcommand{\bmdef}{{\buildrel{{\mathsmaller{\triangle}}}\over{=}}}
\newcommand{\set}[1]{\mathcal{#1} }
\newcommand{\dd}[2]{\frac{\mathrm{d} #1} {\mathrm{d} #2}}
\newcommand{\pdd}[2]{\frac{\partial #1} {\partial #2}}
\def\bmwork{work}
\begin{document}

\pdfoutput=1
\title{DBSOP: An Efficient Heuristic for Speedy MCMC Sampling on Polytopes}
\subtitle{\small{A preprint}}
\titlerunning{Efficient and speedy MCMC sampling on polytopes}

\def\bmdsc{1}
\def\bmionian{2}
\def\bmagrinion{3}

\author{%
    Christos Karras\inst{\bmdsc}\orcid{0000-0002-4253-7661} \href{mailto:c.karras@ceid.upatras.gr}{\Letter}
\and
    Aristeidis Karras\inst{\bmdsc}\orcid{0000-0002-4632-6511}
}%
\authorrunning{C. Karras et al.}%
\institute{%
    Computer Engineering and Informatics Department,\\ University of Patras, Patras, Hellas\\
    \email{\{c.karras, akarras\}@ceid.upatras.gr}\\
}%

\maketitle
\begin{abstract}
Markov Chain Monte Carlo (MCMC) techniques have long been studied in computational geometry subjects whereabouts the problems to be studied are complex geometric objects which by their nature require optimized techniques to be deployed or to gain useful insights by them. MCMC approaches are directly answering to geometric problems we are attempting to answer, and how these problems could be deployed from theory to practice. Polytope which is a limited volume in n-dimensional space specified by a collection of linear inequality constraints require specific approximation. Therefore, sampling across density based polytopes can not be performed without the use of such methods in which the amount of repetition required is defined as a property of error margin. In this work we propose a simple accurate sampling approach based on the triangulation (tessellation) of a polytope. Moreover, we propose an efficient algorithm named Density Based Sampling on Polytopes (DBSOP) for speedy MCMC sampling where the time required to perform sampling is significantly lower compared to existing approaches in low dimensions with complexity $\mathcal{O}^{*}\left(n^{3}\right)$. Ultimately, we highlight possible future aspects and how the proposed scheme can be further improved with the integration of reservoir-sampling based methods resulting in more speedy and efficient solution. 

\keywords{Polytopes \and Uniform Sampling \and Data Engineering \and Convex Bodies \and Markov Chain \and Monte Carlo \and Hit-and-Run Methods}
\end{abstract}
\section{Introduction}

The sampling process across different distributions is a major topic in statistics, probability, systems engineering, as well as other disciplines that use stochastic models (\cite{bremaud:2013},\cite{geman:1984},\cite{hastings:1970},\cite{revuz:2008},\cite{ripley:2009}). Before Monte-Carlo techniques may be used to estimate anticipated values and other integrals, sampling algorithms must first be developed and implemented. In recent decades, Markov Chain Monte Carlo (MCMC) algorithms have gained remarkable success; for example, the book \cite{brooks:2011} and the references therein discuss this issue in great detail. These tactics are predicated on the creation of a Markov model with a density function that matches the goal distribution in which the chain is simulated for a set number of steps to generate samples. MCMC algorithms offer the benefit of requiring just wisdom of the desired density up to a ratio constant, significantly reducing the quantity of data required. On the other hand, theoretical knowledge of the MCMC methods that are employed in practice is far from adequate. It is critical to control the decomposition rate of a MCMC operation, which can be defined as the amount of repetitions required as a property of error margin, issue element $n$, and other variables  for the chain to land on a distribution that is well within a specific range from the objective.

\subsection{Problem definition}
We are concerned with the issue of sampling from a uniform density across a convex polytope\footnote{Not to be confused with Polytropes.}, which is a limited volume in $n$-dimensional space specified by a collection of linear inequality constraints, and we are interested in sampling from a convex polytope.
There are several applications for this sampling issue, but we are particularly interested in its application to the sampling of weight vectors for multi-class discriminant analysis (MCDA).
Previous research has shown that the method of Hit-n-Run may be often employed to this particular use case \cite{mete:2012}.
Hit-n-Run has the drawback of being a MCMC method, which necessitates that use of convergence is required checking or oversampling to confirm that convergence has been achieved.
In this work, we investigate a straightforward precise sampling procedure based on the triangulation (tesselation) of a polytope. Technical abbreviations are defined the very first time they appear in the text. Ultimately, the notation used in this work is given in table \ref{tab:notation}.

\begin{table}[htbp]
\centering
\caption{Notation of this \bmwork.}
\label{tab:notation}
\begin{tabular}{|l|l|l|}
    \hline
    Symbol\phantom{00}  &  Meaning  &  First in\phantom{00}  \cr
    \hline
    $\bmdef$  &  Definition or equality by definition  &  Eq. \eqref{eq:polytope}\cr
    $|\cdot|$ & Absolute value & Eq. \eqref{eq:content}\cr
    $\det(\cdot)$ & Determinant & Eq. \eqref{eq:content}\cr
    
    \hline
\end{tabular}
\end{table}

\section{Related Work}
With a number of applications and methodologies, the challenge of equally sampling from a polytope is crucial to the success of the process. A good example is the basis for a number of ways of estimating randomised approximations to polytope volumes, such as the one described here. A lengthy history of study on sampling strategies for generating randomised estimates to the dimensions of polytopes and other convex structures can be found in works such as \cite{lovasz:1990},\cite{lawrence:1991},\cite{belisle:1993},\cite{lovasz:1999},\cite{cousins:2014}. Aspects of polytope sampling that are particularly advantageous include the development of fast randomised algorithms for multiobjective problems \cite{bertsimas:2004} and sampling situational tables \cite{kannan:2012}, Additionally, randomised strategies for approximated solving mixed - integer linear convex programmes are being studied and developed \cite{huang:2013}. Polytope sampling, as indicated in \cite{kapfer:2013}, is also associated with hard-disk model simulators in statistical physics, along with estimations of erroneous incidences for linear programming in communication \cite{feldman:2005}.

In order to sample across a uniform distribution encompassing a targeted polytope, one approach follows the assumption to gain useful samples from a homogeneous proposal density which is covering the targeted polytope, for instance, a homogeneous density centred on a square hyperbox, or a Dirichlet distribution, both of which are examples of uniform proposal densities. As demonstrated in \cite{jia:1998}, the Dirichlet population is uniform across the simplex when the density factor is assigned to 1. This attribute was employed to establish homogeneity throughout the simplex in the multi-class discriminant analysis (MCDA) scenario \cite{li:2006}. In order to avoid a situation where the proposal density is close to the desired density, such techniques must include a rejection phase. Generally, the rate of rejection grows in a exponential way proportionally with the size of the sample space, making this strategy ineffective for large sample spaces \cite{mackay1998introduction}. Additionally, weights may be simulated using a variety of MCMC techniques, which are detailed below. Typically, a trade-off arises among the frequency of mixing and the rate of acceptance by the sampler. While dealing with homogeneous joints and dependent distributions, a solitary-state sampler such as Gibbs is the ideal approach \cite{gelfand2000gibbs}. In this circumstance, the rate of rejection is zero by default, and the weights can be repeatedly replicated while adhering to the linear limits and ratio limitations set out. It has been shown that using a systematic strategy of repeated sampling, there are strong connections between drawings and delayed mixing \cite{amit:1991},\cite{besag:1995}. Improved mixing characteristics may be achieved by modelling the weights together using random walk methods, as opposed to simulating them separately.

Numerous MCMC approaches \cite{karrasoverview,karrasmaximum} have widely been investigated for sample processes through polytopes schemas and, more broadly, for convex bodies sampling processes. There are several preliminary observations of algorithms that perform sampling derived from broad convex bodies, including the Ball Walk shown in \cite{lovasz:1990} and the hit-n-run approach proposed in \cite{belisle:1993},\cite{lovasz:1999}. Despite the fact that these approaches are applicable to polytopes, they do not take use of the particular structure presented by the issue. In contrast, the Dikin walk was introduced in \cite{kannan:2012}, which is tailored for polytopes and so achieves greater convergence rates than generic techniques. With its connection to methodologies for solving linear programmes using interior point approaches, the Dikin walk was the first sampling process found globally. Additionally, as stated in greater detail later in this section, it generates proposal distributions beginning with the typical logarithmic barrier for a polytope. As inducted in \cite{narayanan:2016}, it was shown that the Dikin walk may be extended to generic curves with subconscious barriers, which was proven in a further study.

\section{Methodology}

\subsection{Definitions and requisites}

\begin{definition}[Polytope]
    A bounded convex $n$-polytope or polytope is the group of points
    \begin{equation}\label{eq:polytope}
        \set{P} \:\bmdef\: \{ \ 
            \bm{p}: \bm{A}\bm{p} \leq \bm{b} \ 
        \}
    \end{equation}
    in $\mathbb{R}^n$, 
    where $\bm{A}$ is a $r \times n$ real matrix of coefficients, $\bm{b}$ is a $r$-vector, and the relation~$\leq$ is meant elementwise.
   A polytope can be determined by its vertices or extreme points $\set{V}$ ($\set{V}$ represenation of polytopes). 
\end{definition}
$\\$

\begin{definition}[Simplex]
    Let  $\bm{v}_0, \dots, \bm{v}_n$ be points in general position in $\mathbb{R}^n$. The set 
    \begin{equation}\label{eq:simplex}
        \set{S} \:\bmdef\: \Big\{ \bm{p} : \bm{p}=  a_0 \bm{v}_0 + \dots a_n \bm{v}_n, a_i \geq 0, \sum_{i=0}^{n}{a_i} = 1  \ \forall \ i=0, \dots, n \Big\}
    \end{equation}
    is a $n$-simplex or simplex in $\mathbb{R}^n$. 
    $\set{S}$ is an $n$-dimensional polytope.
\end{definition}

\noindent
More compactly, write $\bm{V}_0 = (\bm{v}_1 - \bm{v}_0, \dots , \bm{v}_n- \bm{v}_0)$ and $\bm{a} = (a_1, \dots, a_n)^T$, with $(^T)$ denoting transpose. Then 
\begin{eqnarray*}
    \bm{p} =& a_0 \bm{v}_0 + a_1 \bm{v}_1 + \dots a_n \bm{v}_n  \\
     =& (1 - \sum_{i=1}^n{a_i}) \bm{v}_0 + \sum_{i=1}^n{a_i \bm{v}_i}  \\
     =& \bm{v}_0 + \sum_{i=1}^n{a_i (\bm{v}_i -  \bm{v}_0)}  \\
     =& \bm{v}_0 + \bm{V}_0 \bm{a},  
\end{eqnarray*}
and \eqref{eq:simplex} becomes 

\begin{equation}\label{eq:simplex.mat}
    \set{S} = \{ \bm{p} : \bm{p}=  \bm{v}_0 + \bm{V}_0 \bm{a} , \ \bm{a} \geq \bm{0}, \bm{a} \bm{1}^T \leq 1 \},
\end{equation}
where $\geq, \leq$ are meant elementwise. 

\noindent
The volume of $\set{S}$ is 
\begin{equation}\label{eq:content}
    \textrm{Vol}(\set{S}) = \frac{|\det(\bm{V}_0)|}{n!}, 
\end{equation}
where $|\cdot|$ means absolute value and $\det(\cdot)$ means determinant. 
Because $\bm{v}_0, \dots, \bm{v}_n$ are in general position, rank$(\bm{V}_0)=n$ and $\det(\bm{V}_0) \neq 0$. 

$\\$

\begin{lemma}{Simplicial decomposition of polytopes} \label{le:simdecomp}$\\$
    An $n$-polytope $\mathcal{P}$ can be decomposed into $n$-dimensional simplices $\mathcal{S}_k, \ k=1,\dots K$ such that 
    $\mathcal{P} = {\mathcal{S}_1} \cup \dots \cup {\mathcal{S}_K}$ 
    and, for $k \neq l$,  
    $\mathcal{S}_k \cap \mathcal{S}_l = \emptyset$ 
    or 
    $\mathcal{S}_k \cap \mathcal{S}_l = \mathcal{T}$, where $\mathcal{T}$ is a lower-dimensional simplex. 
\end{lemma}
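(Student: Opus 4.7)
The plan is to proceed by induction on the dimension $n$, using a coning construction from a fixed vertex. The base case $n=1$ is immediate, since a bounded convex $1$-polytope is a line segment, i.e.\ a $1$-simplex. For the inductive step, fix an arbitrary vertex $\bm{v}_0 \in \set{V}$ of $\set{P}$ and separate the facets (i.e.\ the $(n-1)$-dimensional faces) of $\set{P}$ into those containing $\bm{v}_0$ and those not containing it; call the latter $F_1, \dots, F_m$. Each $F_i$ is a bounded convex $(n-1)$-polytope, so by the inductive hypothesis it admits a simplicial decomposition into $(n-1)$-simplices $\sigma_{i,1}, \dots, \sigma_{i,K_i}$ satisfying the stated intersection condition.

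The next step is to form the candidate $n$-simplices by coning: set $\set{S}_{i,j} \bmdef \mathrm{conv}\bmparen{\{\bm{v}_0\} \cup \sigma_{i,j}}$. Because $\bm{v}_0$ is separated from $F_i$ by the supporting hyperplane of $F_i$, the point $\bm{v}_0$ is affinely independent from the $n$ vertices of $\sigma_{i,j}$, so each $\set{S}_{i,j}$ is a genuine $n$-simplex in the sense of equation \eqref{eq:simplex}. Covering is then argued as follows: for any $\bm{p} \in \set{P} \setminus \{\bm{v}_0\}$, the ray from $\bm{v}_0$ through $\bm{p}$ exits $\set{P}$ at some boundary point $\bm{q}$; by convexity, if $\bm{q}$ lay on a facet $F$ containing $\bm{v}_0$ then the whole segment $[\bm{v}_0,\bm{q}]$ would lie in $F$, forcing $\bm{p}$ itself into that face and reducing the problem to a lower-dimensional one already handled by the inductive subdivisions. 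Otherwise, $\bm{q} \in F_i$ for some $i$, hence $\bm{q} \in \sigma_{i,j}$ for some $j$ by the inductive decomposition, and therefore $\bm{p} \in \set{S}_{i,j}$.

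For the intersection property, observe that any two coned simplices $\set{S}_{i,j}$ and $\set{S}_{i',j'}$ meet in the cone from $\bm{v}_0$ of the intersection of their bases $\sigma_{i,j} \cap \sigma_{i',j'}$. If $i = i'$, the inductive hypothesis applied to $F_i$ guarantees this intersection is empty or a face-simplex, and coning from a point affinely independent of that face produces a lower-dimensional simplex as required. If the bases are disjoint or share only $\bm{v}_0$, the intersection reduces trivially.

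The main obstacle is the cross-facet case $i \neq i'$: the bases live in different hyperplanes and can only meet along the shared ridge $F_i \cap F_{i'}$, which is itself an $(n-2)$-polytope. For the cones to intersect cleanly in a simplex, the inductive subdivisions of $F_i$ and $F_{i'}$ must agree on this shared ridge, and this compatibility is not guaranteed by the bare inductive statement. I would handle this by strengthening the induction to produce a \emph{pulling triangulation}: fix once and for all a total order on $\set{V}$, and define the triangulation recursively by always coning from the least vertex present in the current face. The restriction of a pulling triangulation to any face is again the pulling triangulation of that face with the induced order, which forces the required compatibility across all shared ridges and closes the argument.
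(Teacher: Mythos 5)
Your argument is correct, but note that the paper itself offers no proof of Lemma~\ref{le:simdecomp}: the lemma is stated as a known fact of polyhedral geometry and is only realized operationally later, via the Delaunay triangulation invoked in step 2 of Algorithm~\ref{alg:sampling}. Your write-up is therefore a self-contained proof rather than a variant of the paper's route, and its main merit is that you correctly identify the real difficulty that a naive cone-from-a-vertex induction glosses over: independently chosen triangulations of two facets $F_i$ and $F_{i'}$ need not agree on the shared ridge $F_i \cap F_{i'}$, so the coned simplices could overlap improperly. Strengthening the induction to a pulling triangulation (always cone from the least vertex in a fixed total order on $\set{V}$) is the standard and correct repair, precisely because pulling triangulations restrict to pulling triangulations of faces. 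Two places would deserve a little more care in a full write-up: (i) in the covering step, for a point $\bm{p}$ lying in a facet that contains $\bm{v}_0$ you need the observation that any face of $\set{P}$ not containing $\bm{v}_0$ is contained in some facet not containing $\bm{v}_0$ (otherwise $\bm{v}_0$ would lie in the intersection of all facets through that face, i.e.\ in the face itself), so the recursion genuinely terminates inside one of your cones; and (ii) the cross-facet case implicitly uses that the cones from $\bm{v}_0$ over $F_i$ and $F_{i'}$ meet exactly in the cone over $F_i \cap F_{i'}$, which holds because a segment from $\bm{v}_0$ to a point of $\set{P}$ can meet the supporting hyperplane of $F_{i'}$ only at its endpoint in $F_{i'}$. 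Both points are routine, so your proposal stands; the alternative the paper implicitly relies on is simply to cite standard triangulation results (placing/pulling or Delaunay triangulations of the vertex set) in place of any argument.
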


\begin{corollary}\label{cor:vol}
    If $\mathcal{S}_1, \dots, \mathcal{S}_K$ is a simplicial decomposition of $\mathcal{P}$, then $\mathrm{Vol}(\mathcal{P}) = \sum_{k=1}^K{\mathrm{Vol}(\mathcal{S}_k)}$.
\end{corollary}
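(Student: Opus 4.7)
The plan is to combine the decomposition provided by Lemma~\ref{le:simdecomp} with finite additivity of $n$-dimensional Lebesgue measure, handling the overlap pieces via their lower-dimensional nature. Since Vol on an $n$-simplex is defined by \eqref{eq:content} and extends as the usual $n$-dimensional Lebesgue measure on $\mathbb{R}^n$, the argument is essentially the observation that boundaries of simplices are negligible.

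First, I would fix a simplicial decomposition $\mathcal{P} = \mathcal{S}_1 \cup \cdots \cup \mathcal{S}_K$ given by Lemma~\ref{le:simdecomp}, so that for every $k \neq l$ the intersection $\mathcal{S}_k \cap \mathcal{S}_l$ is either empty or a simplex $\mathcal{T}_{kl}$ of dimension strictly less than $n$. The crucial geometric fact I would invoke is that such a $\mathcal{T}_{kl}$ lies in a proper affine subspace of $\mathbb{R}^n$, and therefore has $n$-dimensional volume zero. This either follows from an $\varepsilon$-cover argument with slab-shaped boxes, or, equivalently, from the determinant formula \eqref{eq:content} applied to a degenerate $\bm{V}_0$, where one of the edge vectors becomes a linear combination of the others and the determinant vanishes.

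Next, I would pass from the ``almost disjoint'' union to an exact sum by peeling off the overlaps inductively. Define $\mathcal{B}_1 \bmdef \mathcal{S}_1$ and $\mathcal{B}_k \bmdef \mathcal{S}_k \setminus (\mathcal{S}_1 \cup \cdots \cup \mathcal{S}_{k-1})$ for $k \geq 2$. Then the $\mathcal{B}_k$ are pairwise disjoint, $\bigcup_k \mathcal{B}_k = \mathcal{P}$, and $\mathcal{S}_k \setminus \mathcal{B}_k \subseteq \bigcup_{j<k}(\mathcal{S}_k \cap \mathcal{S}_j)$, which is a finite union of lower-dimensional simplices and hence has volume zero. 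Consequently $\mathrm{Vol}(\mathcal{B}_k) = \mathrm{Vol}(\mathcal{S}_k)$, and finite additivity of Lebesgue measure on the disjoint sets $\mathcal{B}_k$ yields
\begin{equation*}
    \mathrm{Vol}(\mathcal{P}) \;=\; \sum_{k=1}^K \mathrm{Vol}(\mathcal{B}_k) \;=\; \sum_{k=1}^K \mathrm{Vol}(\mathcal{S}_k),
\end{equation*}
as required.

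The main obstacle is the otherwise innocuous-looking step that a lower-dimensional simplex truly has zero $n$-volume: the paper defines $\mathrm{Vol}$ only for full-dimensional simplices via \eqref{eq:content}, so one has to either extend $\mathrm{Vol}$ to Lebesgue measure on $\mathbb{R}^n$ and cite that proper affine subspaces are null sets, or give the $\varepsilon$-cover argument inline. Once that point is conceded, the rest is elementary finite-additivity bookkeeping, and no inclusion--exclusion at triple or higher intersections is needed because the peeling construction absorbs them automatically.
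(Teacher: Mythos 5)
Your proposal is correct: the peeling argument together with the observation that the pairwise intersections are lower-dimensional (hence Lebesgue-null) sets gives exactly the additivity claimed in Corollary~\ref{cor:vol}. Note that the paper itself states this corollary without any proof, so your write-up simply supplies the standard justification that the paper leaves implicit, and it does so correctly, including the honest remark that $\mathrm{Vol}$ as defined by \eqref{eq:content} must be understood as (or extended to) $n$-dimensional Lebesgue measure for the null-set step to make sense.
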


\subsection{Construction of a uniform density over a simplex $\set{S}$}

\begin{theorem}\label{th:g(p)}
    Assume $\bm{w}=[w_i], \ i=1, \dots, n$ to be a vector created at random with $w_i \geq 0, w_1 + \dots + w_n \leq 1$, and density $h(\bm{w})$. 
    Then the vector $\bm{p} = \bm{v}_0 + \bm{V}_0 \bm{w} \in \set{S}$ has density 
    \begin{equation}
        g(\bm{p}) =  h(\bm{w}) |\det(\bm{V}_0)|^{-1},
    \end{equation}
\end{theorem}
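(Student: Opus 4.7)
The plan is to recognize this as a direct application of the multivariate change-of-variables formula for probability densities, applied to the affine map $T \colon \bm{w} \mapsto \bm{v}_0 + \bm{V}_0 \bm{w}$. The structure of the proof closely mirrors the already-established parametrization of the simplex in \eqref{eq:simplex.mat}, so the work is mostly bookkeeping rather than genuinely new analysis.

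First, I would establish that $T$ is a smooth bijection from the standard region
\[
    \set{W} \bmdef \{\bm{w} \in \mathbb{R}^n : \bm{w} \geq \bm{0}, \ \bm{1}^T\bm{w} \leq 1\}
\]
onto $\set{S}$. Existence of the mapping onto $\set{S}$ is immediate from \eqref{eq:simplex.mat} (with $\bm{a}$ relabeled as $\bm{w}$). Invertibility follows from the fact that, because $\bm{v}_0, \dots, \bm{v}_n$ are in general position, $\mathrm{rank}(\bm{V}_0) = n$ and thus $\det(\bm{V}_0) \neq 0$, so the inverse map $\bm{w} = \bm{V}_0^{-1}(\bm{p} - \bm{v}_0)$ is well defined on $\set{S}$.

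Next, I would compute the Jacobian of $T$. Since $T$ is affine, the additive constant $\bm{v}_0$ contributes nothing to the derivative, and $\partial \bm{p}/\partial \bm{w} = \bm{V}_0$. Hence $|\det(\partial T / \partial \bm{w})| = |\det(\bm{V}_0)|$, which is nonzero by the preceding step. Applying the standard change-of-variables formula for densities then yields
\[
    g(\bm{p}) = h(T^{-1}(\bm{p})) \, \bigl|\det(\partial T/\partial\bm{w})\bigr|^{-1} = h(\bm{w}) \, |\det(\bm{V}_0)|^{-1},
\]
which is exactly the claim.

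There is not really a substantive obstacle here. The only two points that require a moment of care are (i) checking that both densities are taken with respect to $n$-dimensional Lebesgue measure — this is fine because $\set{S}$ has full dimension $n$ in $\mathbb{R}^n$ — and (ii) noting that the boundary of $\set{W}$ (where the Jacobian is still well defined but the inequalities become equalities) has Lebesgue measure zero and so does not affect the density statement. Everything else is routine.
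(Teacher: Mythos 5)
Your proof is correct and takes essentially the same route as the paper: both are direct applications of the multivariate change-of-variables formula to the affine map $\bm{p} = \bm{v}_0 + \bm{V}_0\bm{w}$, with the paper writing the Jacobian of the inverse map $\bm{V}_0^{-1}$ and you equivalently using the forward Jacobian $|\det(\bm{V}_0)|$ and taking its reciprocal. Your additional remarks on bijectivity and the measure-zero boundary are sound but not needed beyond what the paper records.
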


\begin{proof} $\\$
    This preceding proof utilizes the multivariate principle of Change of Variables:
    \begin{equation*}
        g(\bm{p}) =  h(\bm{w}) \Big| \det \Big( {\dd{\bm{w}}{\bm{p}}} \Big) \Big| = h(\bm{w}) |\det(\bm{V}_0)|^{-1},
    \end{equation*}
    where $\Big( \dd{\bm{w}}{\bm{p}} \Big)_{ij} = \pdd{w_i}{p_j} = (\bm{V}_0^{-1})_{ij}$.
\end{proof}

To construct a uniform probability density in $\set{S}$ define a random variable $\bm{w}$ with uniform density on a regular simplex ${\set{W} = \{ w_i \geq 0, \sum{w_i} \leq 1, i=0, \dots, n\} }$. Such is a $n$-dimensional Dirichlet distribution
\begin{equation}\label{eq:dirichlet}
    h(\bm{w}) = \textrm{Dirichlet}(\bm{1}) =  (n!)^{-1}.
\end{equation}
From Theorem \ref{th:g(p)} the choice \eqref{eq:dirichlet} results in a random vector $\bm{p}$ with uniform density $g(\bm{p}) = (|\det(\bm{V}_0)| n!)^{-1}$ over $\set{S}$. 

\subsection{Construction of a uniform density over a polytope $\set{P}$}

We denote $f(\bm{p})$ for the following cases as: \\
\begin{equation}\label{eq:f(w).1}
    f(\bm{p}) = \begin{cases}
        g_k(\bm{p}) P(\bm{p} \in \set{S}_k) & \textrm{, if } \bm{p} \in \set{S}_k \subseteq \set{P} \ \forall \ k =1, \dots, K \\
        0 & \textrm{, if } \bm{p} \notin \set{P} 
        \end{cases}
\end{equation}
where $P(\bm{p} \in \set{S}_k)$ the probability that $\bm{p}$ belongs to the $k$-th simplex of a decomposition of $\set{P}$ as per Lemma \ref{le:simdecomp}.  Choose $P(\bm{p} \in \set{S}_k) = \textrm{Vol}(\set{S}_k) / \textrm{Vol}(\set{P})$ for all $k$. 
Then for the $k$-th simplex we obtain:
\begin{eqnarray*}\label{eq:f(w).2.1}
        g_k(\bm{p}) P(\bm{w} \in \set{S}_k) &=& g_k(\bm{p}) \Big( \frac{\textrm{Vol}(\set{S}_k)} {\textrm{Vol}(\set{P})} \Big) \\ 
        &=& \frac{1}{|\det(\bm{V}_{0k})| n!} \Big( \frac{|\det(\bm{V}_{0k})| / n!}{\sum_{j=1}^K{|\det(\bm{V}_{0j})|} / n!} \Big) \\
        &=& \frac{1}{ n! \ \sum_{j=1}^K{|\det(\bm{V}_{0j})|}}
\end{eqnarray*}
subsequently \eqref{eq:f(w).1} becomes 
\begin{equation}\label{eq:f(w).2}
    f(\bm{p}) = 
    \begin{cases}
        \big( n! \ \sum_{j=1}^K{|\det(\bm{V}_{0j})|} \big)^{-1} 
            & \textrm{, if } \bm{p} \in \set{P} \\
        0   & \textrm{, if } \bm{p} \notin \set{P} 
    \end{cases}. 
\end{equation}
As per \eqref{eq:f(w).2}, the construction results in a uniform density distribution over the polytope $\set{P}$.

\subsection{Proposed Algorithm for Sampling}

Given the results derived above, we can sample uniformly from a convex polytope $\set{P}$ as defined in algorithm \ref{alg:sampling}.

\begin{algorithm}[htbp]
\caption{Density Based Sampling On Polytope $\set{P}$ (DBSOP)}
\begin{algorithmic}[1]
	\REQUIRE Vertices $\bm{v}_0, \dots, \bm{v}_n$ of a polytope $\set{P}$
	\ENSURE Uniform sampling from a convex polytope $\set{P}$
	\STATE Find the vertices $\bm{v}_0, \dots, \bm{v}_n$ of the polytope. \\This can be achieved using the Avis-Fukuda pivoting algorithm as in \cite{avis1992}.
	\STATE Decompose $\set{P}$ in $n$-simplices $\set{S}_1, \dots, \set{S}_K$ using, e.g., Delaunay triangulation (any triangulation satifying Lemma \ref{le:simdecomp} is appropriate). \\
	\STATE Return the vertices
	 $\set{V}(\set{S}_k)$ and content Vol$(\set{S}_k)$ of each simplex $\set{S}_k$ from the triangulation process.
	 \STATE Set $\bm{q} = (q_1, \dots, q_K)$ with $q_k = \mathrm{Vol}(\set{S}_k) / \mathrm{Vol}(\set{P})$.
    \FOR {the $i$-th of $N$ samples}
        \STATE{Draw a random vector $\bm{w}_i$ form a regular $n$-simplex: \\$ \bm{w}_i  \sim\textrm{Dirichlet}(\bm{1}).$}
                \STATE{Decide which simplex is sampled from $j_i
        \sim \textrm{Categorical}(\bm{q}).$}
          \STATE {Compute the point $\bm{p}_i$ as:             $\bm{p}_i = \bm{v}_{{j_i}0} + \bm{V}_{{j_i}0} \bm{w}_i $}
                \ENDFOR
\end{algorithmic}
\label{alg:sampling}
\end{algorithm}

\subsection{Implementation}
The implementation of the proposed scheme is in RStudio where several libraries were used for each step. To find the number of vertices the \texttt{findVertices} function is used derived from the \texttt{hitandrun} package which in turn uses the \texttt{rcdd} package. To perform the tessellation the \texttt{delaunayn} function is used obtained from the \texttt{geometry} package. The function \texttt{simplex.sample} to sample from the degenerate Dirichlet is used acquired from the \texttt{hitandrun} package and the function \texttt{sample} obtained from the \texttt{base} package is used to sample from the Categorical.

\subsection{Complexity}
Due to the fact that the number of simplices created may scale up to $n!$, triangulation is by far the most dominant term for the complexity. Hence, the algorithm is not feasible in high-dimensional space. The overall complexity is $\mathcal{O}^{*}\left(n^{3}\right)$.

\section{Results}

The running times are shown in table \ref{tab:actual:res} and they were obtained using a 5.2 GHz Intel Core i9-10850k CPU and 32 GB of RAM for a fairly simple polytope. We refer to $n$ as the number of $n$ dimensions of polytopes and to $k$ as interactions. 

\begin{figure}[htbp]
    \centering
    \includegraphics[scale=0.45]{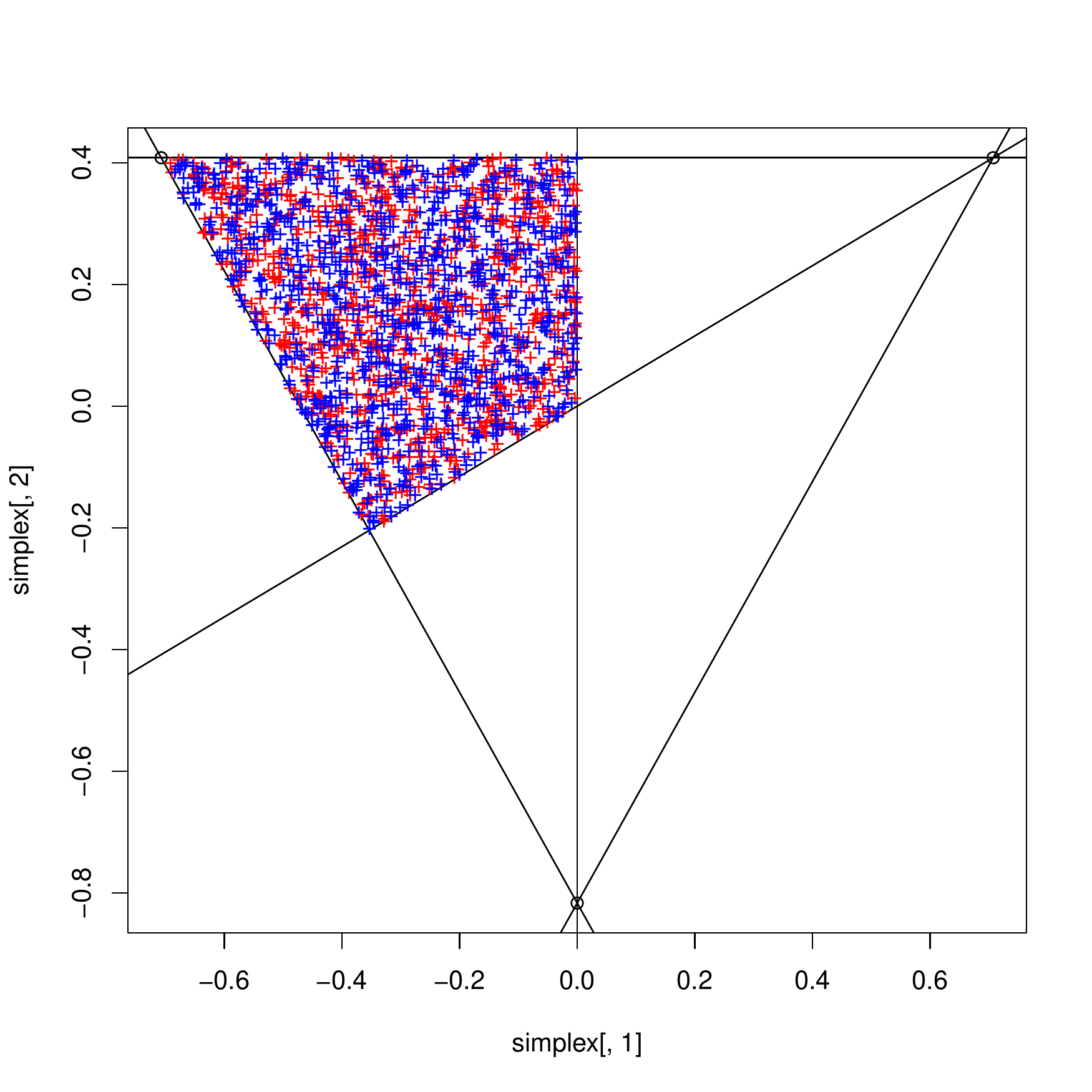}
    \caption{Polytope sampling using the proposed method}
    \label{fig:output}
\end{figure}

\begin{figure}[H]
    \centering
    \includegraphics[scale=0.7]{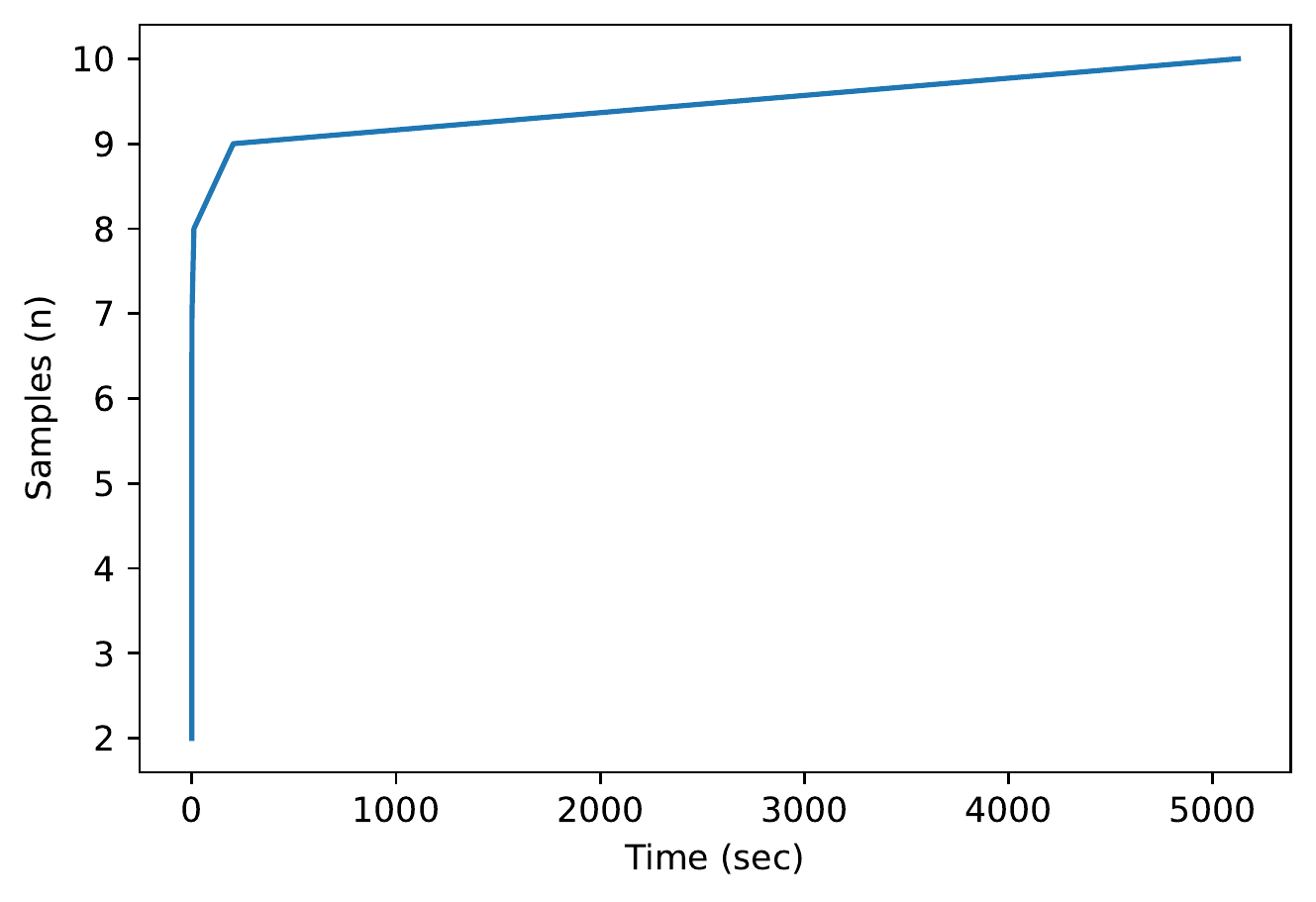}
    \caption{Samples vs Time}
    \label{fig:samples:time}
\end{figure}
As depicted in figure \ref{fig:samples:time}, the algorithm achieves fast sampling up to $n=8$ and shows a steady performance across $n=9, \ldots n=10$.

\begin{figure}[H]
    \centering
    \includegraphics[scale=0.7]{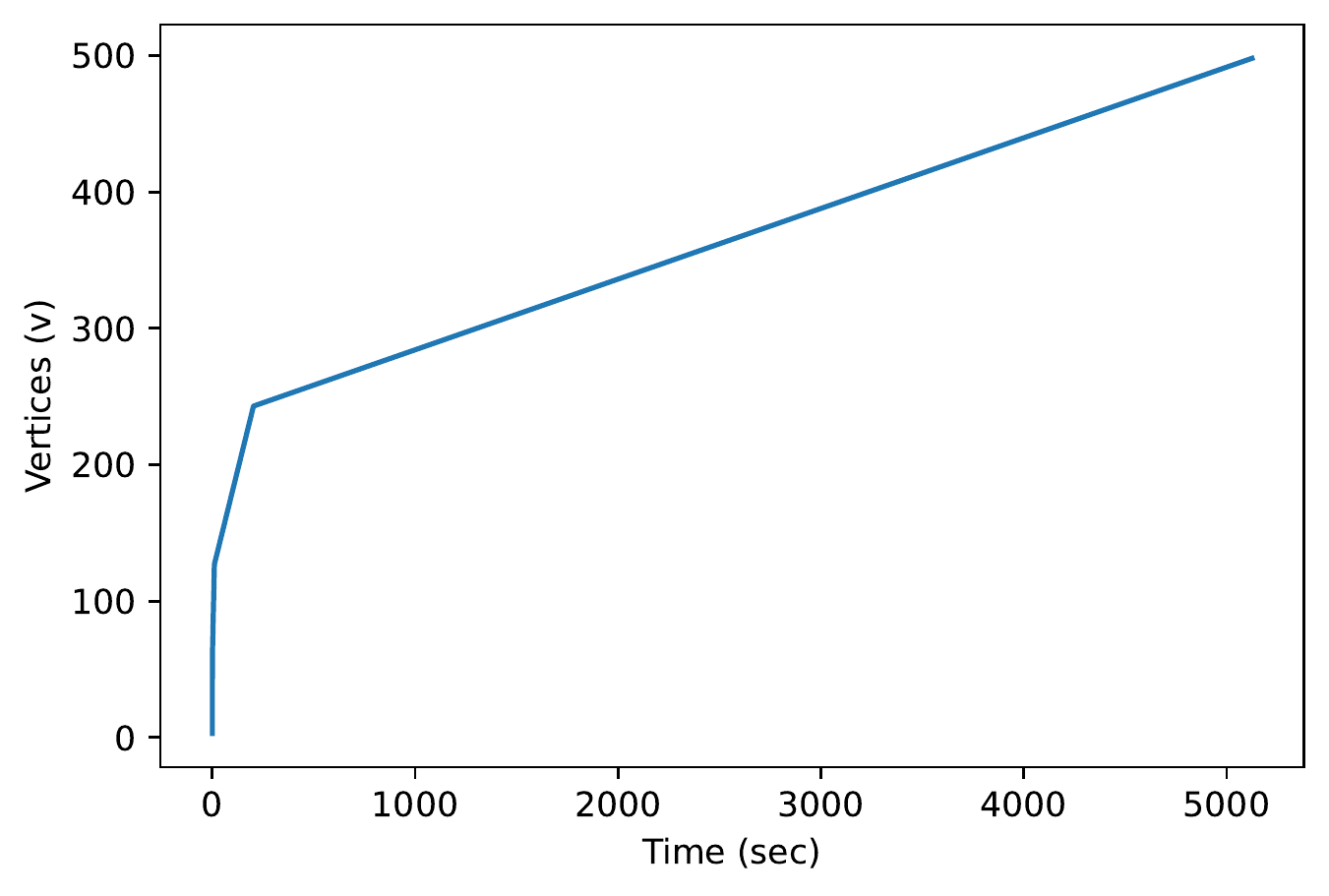}
    \caption{Vertices vs Time}
    \label{fig:vertices:time}
\end{figure}
As depicted in figure \ref{fig:vertices:time}, the vertices found by the algorithm are $\approx 200$ in a relative short time interval while for $\geq$ 250 the process of finding vertices occurs with a stable performance.


\begin{figure}[H]
    \centering
    \includegraphics[scale=0.7]{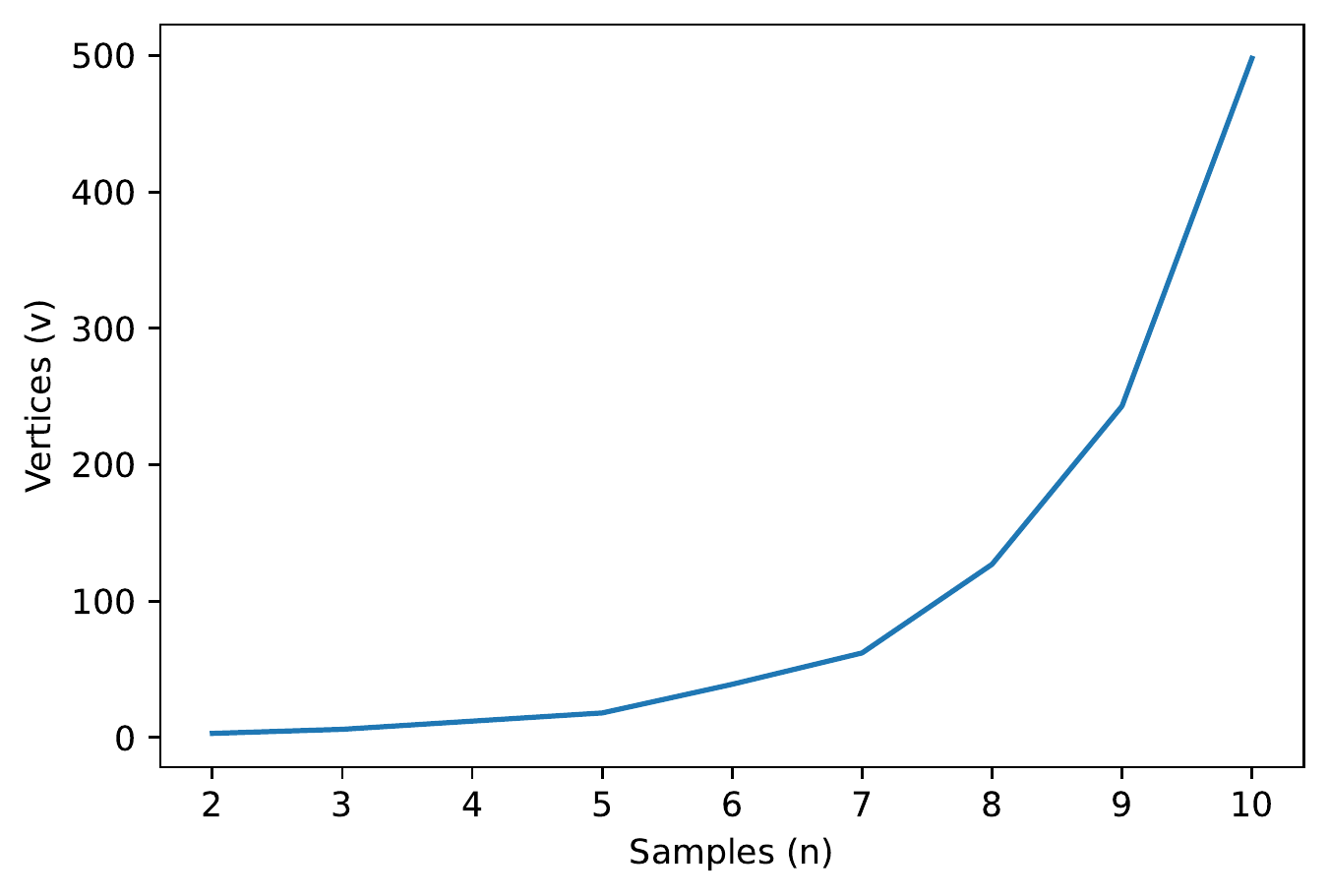}
    \caption{Vertices vs Samples}
    \label{fig:samples:vertices}
\end{figure}
As depicted in figure \ref{fig:samples:vertices} the sampling of $n$-polytopes vs the vertices found can be expressed in a $f(x)=2^{x}-1$ way.

\begin{table}[H]
    \centering
    \caption{Actual Results.}
    \phantom{~}\noindent
    \begin{tabular}
    {|p{0.05\textwidth}>{\centering}p{0.15\textwidth}>{\centering}p{0.07\textwidth}>{\centering\arraybackslash}p{0.22\textwidth}
    |l|}
    \hline
	\multicolumn{4}{|c|}{Actual Results}\\
	 \hline
	 n & t (s) & v & K \\
\hline	2 & 0.142 & 3 & 1 \\
\hline	3 & 0.151 & 6 & 5 \\
\hline	4 & 0.154 & 12 & 44 \\
\hline	5 & 0.159 & 18 & 210 \\
\hline	6 & 0.218 & 39 & 2.486 \\
\hline	7 & 0.731 & 62 & 19.763 \\
\hline	8 & 10.218 & 127 & 359.214 \\
\hline  9 & 202.97 & 243 & 4.481.667 \\
\hline 10 & 5124.75 & 498 & 62.743.338\\
\hline
    \end{tabular} \\
    \label{tab:actual:res}
\end{table}
For $n=10$, the triangulation started swapping out of RAM (31GB used out of 32 total), and therefore this is the last actual measurement taken. Noteworthy, the running time is almost three minutes for $n=9$. Moreover, the triangulation only becomes a dominant cost at $n=7$, and in lower dimensions the running time could be reduced by about $45\%$ through more efficient implementation of step 3 of the algorithm.

Rejection sampling is similarly only feasible up to about $n=8$ \cite{belisle:1993} (note that their $n$ is our $n + 1$).
However, our algorithm is significantly faster for $n=7$ and $n=8$, for example rejection sampling takes over 10 seconds for $n=8$ and about 200 seconds or 3 minutes and 20 seconds for $n=9$.

\begin{table}[H]
    \centering
    \caption{Predicted Results.}
    \phantom{~}\noindent
    \begin{tabular}{|p{0.10\textwidth}>{\centering}p{0.15\textwidth}>{\centering}p{0.15\textwidth}>{\centering}p{0.20\textwidth}>{\centering\arraybackslash}p{0.30\textwidth}|l|}
    \hline
\multicolumn{5}{|c|}{Predicted Results}\\
\hline
n & t (s) & t (days) & v & K \\
\hline	11 & 13572 & 0.15 & 1.082 & 125.486.676 \\
\hline	12 & 34638 & 0.40 & 3.246 & 376.460.028 \\
\hline	13 & 271484 & 3 & 12.984 & 1.505.840.112 \\
\hline	14 & 678605 & 7.8 & 64.920 & 7.529.200.564 \\
\hline	15 & 1678609 & 19.4 & 389.520 & 45.175.203.247 \\
\hline	16 & 4763672 & 55.1 & 2.726.640 & 316.226.423.531 \\
\hline	17 & 8163851 & 94.4 & 21.813.120 & 2.529.811.388.174 \\
\hline  18 & 21263149 & 246.1 & 196.318.080 & 22.768.302.493.218 \\
\hline  19 & 72163554 & 835.2 & 2.159.498.880 & 227.683.024.934.355\\
\hline
\end{tabular} \\
\label{tab:predic:res}
\end{table}
	
Because of lack of more RAM we used a machine learning model trained on the results of table \ref{tab:actual:res} to predict the results for $n=11,12\ldots n=19$. Table \ref{tab:predic:res} depicts the results obtained by the machine learning model where the model shows that as with the actual results, the number of vertices as well as the K and the time $t(s)$ grows exponentially. Note that for $n=19$, the prediction shows that the time required to calculate the polytope will be approximately 835 days or almost 2 years. Hence, it is crucial to readjust the algorithm in step 3 rather than trying to calculate higher dimensions or using more RAM.

\section{Evaluation}
In this section we evaluate the proposed method to existing techniques such as \textit{bench} and \textit{har} \cite{smith:1984}. Figure \ref{fig:comparison} depicts the average time required to perform sampling for $n=1,2 \ldots n=10$. As shown in the figure, the proposed method outperforms the other two existing approaches by $\approx 35\%$.
\begin{figure}[htbp]
    \centering
    \includegraphics[scale=0.60]{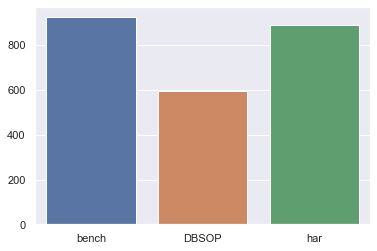}
    \caption{Comparison of the proposed method in terms of average time}
    \label{fig:comparison}
\end{figure}
The evaluation metrics are shown in table \ref{tab:eval}. The proposed method outperforms the other two existing methods across all three metrics (values shown are average) and the overall performance achieved was higher. We moreover define shape compactness (SC) as the number of samples divided by each sampled dimension. 
\begin{table}[H]
    \centering
    \caption{Evaluation of the proposed method.}
    \begin{tabular}{l||c|c|c|c} 
Evaluation Metric & bench & har & DBSOP \\
\hline Z-value & $2.1$ & $4.9$ & $\mathbf{6 . 4}$ \\
\hline SCE & $0.04$ & $0.10$ & $\mathbf{0.27}$ \\
\hline  SC & $6.4$ & $8.4$ & $\mathbf{11 . 2}$
\label{tab:eval}
\end{tabular}
    
    \label{tab:event:detection}
\end{table}

\section{Conclusions and Future Work}
In the context of this work, a solution to the problem of sampling from a uniform density over a convex polytope is presented, where polytope is a finite volume in $n$-dimensional space characterized by a combination of linear inequality constraints. This sampling problem has a variety of applications, but we are especially interested in how it might be used to the sampling of weight vectors for multi-class discriminant analysis (MCDA). The outcome of the proposed algorithm resulted in a efficient and fast sampling scheme whereabouts the time required was significantly lower than existing methods in low dimensions. However, for high dimensions we may require futher investigation of the rejection rate. Future directions of this work include the readjustment of step 3 of the algorithm to decrease the time required to perform sampling. An efficient variation of this step could decrease the cost significantly resulting in a $\approx45\%$ reduction. Moreover, another future aspect is to transform the problem of sampling in a CPU-based approach rather than using RAM memory, which will result in a parallel execution of all steps without requiring significant amount of I/Os. Ultimately, a potential path for this work in the future is the integration of reservoir-based sampling techniques as in \cite{karras2022pattern} where the selection of $k$ elements representative of the whole distribution will occur to enhance the overall performance and to further reduce the time as well as the cost required to perform sampling. Moreover, deep learning methods that utilize a shared layer as in \cite{karras2022integrating} can enhance the performance of the system and can be investigated in the future.

\bibliographystyle{splncs04}
\bibliography{polytopes-sampling}

\begin{thebibliography}{10}
\providecommand{\url}[1]{\texttt{#1}}
\providecommand{\urlprefix}{URL }
\providecommand{\doi}[1]{https://doi.org/#1}

\bibitem{amit:1991}
Amit, Y., Grenander, U.: Comparing sweep strategies for stochastic relaxation.
  Journal of multivariate analysis  \textbf{37}(2),  197--222 (1991)

\bibitem{avis1992}
Avis, D., Fukuda, K.: A pivoting algorithm for convex hulls and vertex
  enumeration of arrangements and polyhedra. Discrete \& Computational Geometry
   \textbf{8}(1),  295--313 (1992). \doi{10.1007/BF02293050}

\bibitem{belisle:1993}
B{\'e}lisle, C.J., Romeijn, H.E., Smith, R.L.: {H}it-and-run algorithms for
  generating multivariate distributions. Mathematics of Operations Research
  \textbf{18}(2),  255--266 (1993)

\bibitem{bertsimas:2004}
Bertsimas, D., Vempala, S.: Solving convex programs by random walks. Journal of
  the ACM (JACM)  \textbf{51}(4),  540--556 (2004)

\bibitem{besag:1995}
Besag, J., Green, P., Higdon, D., Mengersen, K.: {B}ayesian computation and
  stochastic systems. Statistical science pp. 3--41 (1995)

\bibitem{bremaud:2013}
Br{\'e}maud, P.: Markov chains: Gibbs fields, Monte Carlo simulation, and
  queues, vol.~31. Springer Science \& Business Media (2013)

\bibitem{brooks:2011}
Brooks, S., Gelman, A., Jones, G., Meng, X.L.: Handbook of markov chain monte
  carlo. CRC press (2011)

\bibitem{cousins:2014}
Cousins, B., Vempala, S.: {A} cubic algorithm for computing gaussian volume.
  In: Proceedings of the twenty-fifth annual ACM-SIAM symposium on discrete
  algorithms. pp. 1215--1228. SIAM (2014)

\bibitem{feldman:2005}
Feldman, J., Wainwright, M.J., Karger, D.R.: {U}sing linear programming to
  decode binary linear codes. IEEE Transactions on Information Theory
  \textbf{51}(3),  954--972 (2005)

\bibitem{gelfand2000gibbs}
Gelfand, A.E.: Gibbs sampling. Journal of the American statistical Association
  \textbf{95}(452),  1300--1304 (2000)

\bibitem{geman:1984}
Geman, S., Geman, D.: Stochastic relaxation, gibbs distributions, and the
  bayesian restoration of images. IEEE Transactions on Pattern Analysis and
  Machine Intelligence  \textbf{PAMI-6}(6),  721--741 (1984).
  \doi{10.1109/TPAMI.1984.4767596}

\bibitem{hastings:1970}
Hastings, W.K.: {M}onte carlo sampling methods using markov chains and their
  applications. Biometrika  \textbf{57}(1),  97--109 (1970)

\bibitem{huang:2013}
Huang, K.L., Mehrotra, S.: {A}n empirical evaluation of walk-and-round
  heuristics for mixed integer linear programs. Computational optimization and
  applications  \textbf{55}(3),  545--570 (2013)

\bibitem{jia:1998}
Jia, J., Fischer, G.W., Dyer, J.S.: {A}ttribute weighting methods and decision
  quality in the presence of response error: a simulation study. Journal of
  Behavioral Decision Making  \textbf{11}(2),  85--105 (1998)

\bibitem{kannan:2012}
Kannan, R., Narayanan, H.: {R}andom walks on polytopes and an affine interior
  point method for linear programming. Mathematics of Operations Research
  \textbf{37}(1),  1--20 (2012)

\bibitem{kapfer:2013}
Kapfer, S.C., Krauth, W.: {S}ampling from a polytope and hard-disk monte carlo.
  In: Journal of Physics: Conference Series. vol.~454, p. 012031. IOP
  Publishing (2013)

\bibitem{karras2022integrating}
Karras, A., Karras, C.: {I}ntegrating {U}ser and {I}tem {R}eviews in {D}eep
  {C}ooperative {N}eural {N}etworks for {M}ovie {R}ecommendation. arXiv
  preprint arXiv:2205.06296  (2022)

\bibitem{karrasmaximum}
Karras, C., Karras, A., Avlonitis, M., Giannoukou, I., Sioutas, S.: {M}aximum
  {L}ikelihood {E}stimators on {MCMC} {S}ampling {A}lgorithms for {D}ecision
  {M}aking. In: IFIP International Conference on Artificial Intelligence
  Applications and Innovations. pp. 345--356. Springer (2022)

\bibitem{karrasoverview}
Karras, C., Karras, A., Avlonitis, M., Sioutas, S.: {A}n {O}verview of {MCMC}
  {M}ethods: {F}rom {T}heory to {A}pplications. In: IFIP International
  Conference on Artificial Intelligence Applications and Innovations. pp.
  319--332. Springer (2022)

\bibitem{karras2022pattern}
Karras, C., Karras, A., Sioutas, S.: {P}attern {R}ecognition and {E}vent
  {D}etection on {I}o{T} {D}ata-streams. arXiv preprint arXiv:2203.01114
  (2022). \doi{10.48550/arXiv.2203.01114}

\bibitem{lawrence:1991}
Lawrence, J.: {P}olytope volume computation. Mathematics of computation
  \textbf{57}(195),  259--271 (1991)

\bibitem{li:2006}
Li, T., Zhu, S., Ogihara, M.: Using discriminant analysis for multi-class
  classification: an experimental investigation. Knowledge and information
  systems  \textbf{10}(4),  453--472 (2006)

\bibitem{lovasz:1999}
Lov{\'a}sz, L.: {H}it-and-run mixes fast. Mathematical programming
  \textbf{86}(3),  443--461 (1999)

\bibitem{lovasz:1990}
Lov{\'a}sz, L., Simonovits, M.: The mixing rate of markov chains, an
  isoperimetric inequality, and computing the volume. In: Proceedings [1990]
  31st annual symposium on foundations of computer science. pp. 346--354. IEEE
  (1990)

\bibitem{mackay1998introduction}
Mackay, D.J.C.: Introduction to monte carlo methods. In: Learning in graphical
  models, pp. 175--204. Springer (1998)

\bibitem{mete:2012}
Mete, H., Zabinsky, Z.: Pattern hit-and-run for sampling efficiently on
  polytopes. Oper. Res. Lett.  \textbf{40},  6--11 (01 2012).
  \doi{10.1016/j.orl.2011.11.002}

\bibitem{narayanan:2016}
Narayanan, H.: {R}andomized interior point methods for sampling and
  optimization. The Annals of Applied Probability  \textbf{26}(1),  597--641
  (2016)

\bibitem{revuz:2008}
Revuz, D.: Markov chains. Elsevier (2008)

\bibitem{ripley:2009}
Ripley, B.D.: {S}tochastic simulation. John Wiley \& Sons (2009)

\bibitem{smith:1984}
Smith, R.L.: {E}fficient monte carlo procedures for generating points uniformly
  distributed over bounded regions. Operations Research  \textbf{32}(6),
  1296--1308 (1984)

\end{thebibliography}

\end{document}